\documentclass[journal,twoside,web,xcdraw,svgnames]{ieeecolor}
\usepackage{generic}
\usepackage{cite}
\usepackage{amsmath,amssymb,amsfonts}

\usepackage{hyperref}

\usepackage[T1]{fontenc}
\usepackage[utf8]{inputenc}
\usepackage{graphicx}
\usepackage{amsmath,systeme,amssymb,amsfonts}
\usepackage[version=4]{mhchem}
\usepackage{siunitx}
\usepackage{longtable,tabularx}
\usepackage{comment}
\usepackage{textgreek}
\usepackage{float}
\usepackage[utf8]{inputenc}
\usepackage{tikz, pgfplots}
\usepackage{textcomp}
\usepackage{tcolorbox} 
\usepackage{graphicx}
\usepackage{amsmath}

\usepackage[version=4]{mhchem}
\usepackage{siunitx}
\usepackage{longtable,tabularx}
\usetikzlibrary{positioning, shapes, arrows}
\usepackage{thmtools}
\usepackage{textcomp}
\usepackage{soul}
\usepackage{float,amsfonts,amsthm,color}
\usepackage{amssymb}
\usepackage{graphics} 
\usepackage{booktabs} 
\usepackage{caption, subcaption}
\usepackage{breqn}
\usepackage{array, mathtools}
\usepackage{pgfplots}
\usepackage{siunitx}
\usepackage{algorithm}
\usepackage[noend]{algpseudocode}
\usepackage{tabulary}
\usetikzlibrary{arrows.meta}
\usepackage{stmaryrd} 
\usepackage{varwidth}
\usepackage{comment}
\usetikzlibrary{positioning}

\newtheorem{theorem}{Theorem}
\newtheorem{definition}{Definition}

\newtheorem{lemma}{Lemma}
\newtheorem{remark}{Remark}

\newtheorem{example}{Example}

\def\BibTeX{{\rm B\kern-.05em{\sc i\kern-.025em b}\kern-.08em
    T\kern-.1667em\lower.7ex\hbox{E}\kern-.125emX}}
\begin{document}
\title{Convergence of Payoff-Based Higher-Order Replicator Dynamics in Contractive Games}
\author{Hassan Abdelraouf, Vijay Gupta, and Jeff S. Shamma 
\thanks{Hassan Abdelraouf is with the School of Electrical and Computer Engineering, Purdue University, USA. (e-mail: abdelra5@purdue.edu) }
\thanks{Vijay Gupta is with the School of Electrical and Computer Engineering, Purdue University, USA. (e-mail: gupta869@purdue.edu) }
\thanks{Jeff S. Shamma is with the Department of Industrial and Enterprise Systems Engineering, Grainger College of Engineering, University of Illinois at Urbana-Champaign,  USA. (e-mail: jshamma@illinois.edu) }}
\maketitle

\begin{abstract}
We study the convergence properties of a payoff-based higher-order version of replicator dynamics, a widely studied model in evolutionary dynamics and game-theoretic learning, in contractive games. Recent work has introduced a control-theoretic perspective for analyzing the convergence of learning dynamics through passivity theory, leading to a classification of learning dynamics based on the passivity notion they satisfy, such as \textdelta-passivity, equilibrium-independent passivity, and incremental passivity. We leverage this framework for the study of higher-order replicator dynamics for contractive games, which form the complement of passive learning dynamics. Standard replicator dynamics can be represented as a cascade interconnection between an integrator and the softmax mapping. Payoff-based higher-order replicator dynamics include a linear time-invariant (LTI) system in parallel with the existing integrator. First, we show that if this added system is strictly passive and asymptotically stable, then the resulting learning dynamics converge locally to the Nash equilibrium in contractive games. Second, we establish global convergence properties using incremental stability analysis for the special case of symmetric matrix contractive games.

\end{abstract}

\begin{IEEEkeywords} 
Replicator dynamics, passivity, incremental stability, learning in games.

\end{IEEEkeywords}
\section{Introduction}
\label{sec:introduction}

Population games provide a mathematical framework for modeling strategic interactions in large populations of agents, where the payoff of each strategy depends on the overall distribution of strategies in the population \cite{sandholm2010population}. The population state is represented by a point in the probability simplex whose components denote the fractions of agents selecting each strategy. A payoff function assigns rewards to these strategies based on the current population composition, while evolutionary or learning dynamics describe how agents revise their strategies in response to observed payoffs. The interaction between the learning dynamics, which map payoffs to strategies, and the game, which maps strategies to payoffs, naturally forms a dynamical system whose trajectories characterize the evolution of the population state. A central question in this framework is whether these dynamics converge to equilibrium concepts such as Nash equilibria.

This interaction can be interpreted as a feedback interconnection between the learning dynamics and the game, enabling the use of control-theoretic tools to analyze convergence properties. In particular, passivity-based methods have been used to study convergence of learning dynamics in population games. In \cite{fox2013population}, the notion of $\delta$-passivity was introduced to analyze convergence in contractive games. However, standard replicator dynamics (RD) fails to converge in certain games such as zero-sum games. To address this issue, \cite{gao2020passivity} introduced the exponential RD and showed that it satisfies equilibrium-independent passivity. 

Building on these ideas, \cite{abdelraouf2025passivity} developed a passivity-based classification of learning dynamics based on the passivity notions they satisfy, including $\delta$-passivity, equilibrium-independent passivity, and incremental passivity, and used this framework to study convergence in contractive games. In a related direction, \cite{martins2025counterclockwise} employed counterclockwise dissipativity to analyze convergence in potential games. These works focus primarily on standard (first-order) learning dynamics.  Higher-order RD were studied in \cite{laraki2013higher}, which develops a general framework for higher-order evolutionary dynamics in games. Higher-order variants of RD have also been proposed to improve convergence properties. In particular, \cite{arslan2006anticipatory} introduced anticipatory RD, and \cite{gao2023second} proposed strategic higher-order RD to improve convergence in zero-sum games.

Motivated by these developments, this paper studies the convergence properties of payoff-based higher-order replicator dynamics in contractive games. The main contributions of this paper are twofold. First, we show that payoff-based higher-order replicator dynamics—characterized by the cascade interconnection between the LTI system $((1/s)+h(s))I_n$ and the softmax mapping—converge locally to the Nash equilibrium in contractive games when $h(s)$ is strictly passive. Second, using incremental stability analysis, we study a broader class of learning dynamics characterized by the cascade interconnection between $G(s)I_n$ and the softmax mapping in symmetric matrix contractive games. We show that when these learning dynamics are implemented in a strictly contractive game, the feedback interconnection between the learning dynamics and the game is incrementally asymptotically stable if $G(s)$ is passive and exponentially incrementally stable if $G(s)$ is strictly passive. Under the Nash stationarity assumption, this implies asymptotic convergence to the Nash equilibrium in the first case and exponential convergence in the second.

\section{Notations}
Let $\mathbb{R}_+$ denote the set of nonnegative real numbers. For a vector 
$x \in \mathbb{R}^n$, $x_i$ denotes its $i$-th component and 
$\mathrm{diag}(x)$ denotes the diagonal matrix whose diagonal entries are given by $x$. 
The identity matrix of dimension $n$ is denoted by $I_n$. 
The vectors of all ones and all zeros in $\mathbb{R}^n$ are denoted by  $\mathbf{1}_n$ and $\mathbf{0}_n$, respectively. 
For $x \in \mathbb{R}^n$, $\|x\|$ denotes the Euclidean norm. For matrices $A$ and $B$, $A\otimes B$ denotes their Kronecker product.
The probability simplex in $\mathbb{R}^n$ is defined as
\(
\Delta_n = \{x \in \mathbb{R}^n : x_i \ge 0,\; i=1,\dots,n,\; \mathbf{1}_n^\top x = 1 \}.
\)
Its interior is denoted by $\mathrm{Int}(\Delta_n)$. The tangent space of $\text{Int}(\Delta_n)$ is
\(
\mathcal{Z} = \{z \in \mathbb{R}^n : \mathbf{1}_n^\top z = 0\}.
\)
Let $N \in \mathbb{R}^{n \times (n-1)}$ be a matrix whose columns form an orthonormal basis of $\mathcal{Z}$, i.e.,
$N^\top N = I_{n-1}$ and $\mathbf{1}_n^\top N = 0$.
The softmax mapping $\sigma:\mathbb{R}^n \to \mathrm{Int}(\Delta_n)$ is defined by
\[
\sigma(x)_i = \frac{\exp(x_i)}{\sum_{j=1}^n \exp(x_j)}
\]

\section{Preliminaries}

\subsection{Replicator dynamics and its higher-order variants}

Replicator dynamics (RD) is a widely studied evolutionary dynamics in population games and learning in games \cite{taylor1978evolutionary, mertikopoulos2016learning}. In RD, the dynamics are given by
\begin{equation}
\label{eq:RD}
\dot z = p, \qquad x=\sigma(z),
\end{equation}
where $p(t)\in\mathbb{R}^n$ is the payoff vector, $z\in\mathbb{R}^n$ is the score vector, and $x\in\Delta_n$ is the resulting mixed strategy. Thus, RD can be viewed as the cascade interconnection between the integrator $(1/s)I_n$ and the softmax mapping $\sigma(\cdot)$.

Motivated by this representation, we consider payoff-based higher-order variants of RD obtained by augmenting the score dynamics with an additional linear time-invariant (LTI) system driven by the payoff signal. Specifically, we consider dynamics of the form
\begin{equation}
\label{eq:predictive_RD}
\begin{aligned}
\dot r &= p,\\
\dot x_h &= (A_h\otimes I_n)x_h + (B_h\otimes I_n)p,\\
m &= (C_h\otimes I_n)x_h,\\
z &= r + m,\\
x &= \sigma(z),
\end{aligned}
\end{equation}
where $A_h\in\mathbb{R}^{m\times m}$, $B_h\in\mathbb{R}^{m\times1}$, and $C_h\in\mathbb{R}^{1\times m}$ define a realization of
\(
h(s)=C_h(sI_m-A_h)^{-1}B_h.
\).  Here $r\in\mathbb{R}^n$ is the standard RD score, $x_h\in\mathbb{R}^{mn}$ is the internal state of the added LTI system, and $m\in\mathbb{R}^n$ is the resulting prediction signal. Equivalently, \eqref{eq:predictive_RD} is the cascade interconnection of
\(
G(s)=\left((1/s)+h(s)\right)I_n
\)
with the softmax mapping $\sigma(\cdot)$. This class includes anticipatory variants of RD \cite{arslan2006anticipatory}. We refer to these dynamics as \emph{payoff-based} because the additional LTI subsystem is driven by the payoff signal. 
This differs from the strategic higher-order RD introduced in \cite{gao2023second}, where the additional subsystem is driven by the strategy signal.

\subsection{Population Games and Nash Stationarity}

We consider single-population games \cite{sandholm2010population}, where the population state is $x\in\Delta_n$ and the payoff mapping is given by
\(
p=F(x), \quad F:\Delta_n\to\mathbb{R}^n.
\)

\begin{definition}[Nash Equilibrium]
A state $x^*\in\Delta_n$ is a Nash equilibrium of the population game
$F$ if
\[
x^{*\top}F(x^*) \ge z^\top F(x^*) \quad \forall z\in\Delta_n.
\]
The set of Nash equilibria is denoted by $\mathrm{NE}(F)$.
\end{definition}

In this paper, we focus on contractive games.
\begin{definition}[Contractive Games \cite{hofbauer2009stable}]
A population game $F$ is called contractive if
\(
(x-y)^\top\big(F(x)-F(y)\big)\le 0
\quad \forall x,y\in\Delta_n.
\)
If equality holds only when $x=y$, the game is said to be
\emph{strictly contractive}.
\end{definition}

Contractive games are also referred to as stable games
\cite{hofbauer2009stable}. When $F$ is continuously differentiable,
contractiveness admits the following differential characterization.

\begin{theorem}[Stable Games \cite{hofbauer2009stable}]
If $F$ is continuously differentiable, then $F$ is contractive if and only if its Jacobian satisfies
\(
z^\top \nabla F(x)z \le 0
\quad \forall x\in\Delta_n,\; z\in \mathcal{Z}.
\)
\end{theorem}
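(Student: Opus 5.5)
Both directions reduce to one-dimensional statements along segments of $\Delta_n$, using the fact that $\Delta_n$ is convex and that every difference $x-y$ of two points of $\Delta_n$ lies in the tangent space $\mathcal{Z}$. Throughout, I take $F$ to be $C^1$ on an open neighborhood of $\Delta_n$ (or one-sided derivatives on the boundary), so that $\nabla F(x)$ makes sense for all $x\in\Delta_n$.

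\emph{Sufficiency (Jacobian condition $\Rightarrow$ contractive).} Fix $x,y\in\Delta_n$ and set $z=x-y\in\mathcal{Z}$. By convexity, $y+tz\in\Delta_n$ for $t\in[0,1]$. The fundamental theorem of calculus gives
\[
F(x)-F(y)=\int_0^1 \nabla F(y+tz)\,z\,dt ,
\]
and therefore
\[
(x-y)^\top\big(F(x)-F(y)\big)=\int_0^1 z^\top\nabla F(y+tz)\,z\,dt\le 0 .
\]
The integrand is nonpositive by hypothesis, so the integral is too.

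\emph{Necessity (contractive $\Rightarrow$ Jacobian condition).} Fix $x\in\Delta_n$ and $z\in\mathcal{Z}$. The main obstacle is that $x+\epsilon z$ may leave $\Delta_n$ when $x$ lies on the boundary, so we cannot simply perturb $x$ in the direction $z$.
\begin{itemize}
\item \emph{Interior points.} If $x\in\mathrm{Int}(\Delta_n)$, then $x+\epsilon z\in\Delta_n$ for all small $\epsilon>0$. Contractiveness applied to $x+\epsilon z$ and $x$, after dividing by $\epsilon^2$, gives
\[
z^\top\,\frac{F(x+\epsilon z)-F(x)}{\epsilon}\le 0 .
\]
Letting $\epsilon\to0$ yields $z^\top\nabla F(x)z\le 0$.
\item \emph{Boundary points.} The map $x\mapsto z^\top\nabla F(x)z$ is continuous because $F$ is $C^1$. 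Since $\mathrm{Int}(\Delta_n)$ is dense in $\Delta_n$, the inequality passes to the limit and holds for every $x\in\Delta_n$.
\end{itemize}
Both directions are routine once the segment argument and the density step are in place.
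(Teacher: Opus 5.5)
Your proof is correct: integrating $z^\top\nabla F(y+tz)z$ along the segment gives sufficiency, and for necessity the interior difference quotient $(\epsilon z)^\top\bigl(F(x+\epsilon z)-F(x)\bigr)/\epsilon^2\le 0$, followed by continuity of $\nabla F$ and density of $\mathrm{Int}(\Delta_n)$, handles boundary points; your standing assumption that $F$ is $C^1$ on a neighborhood of $\Delta_n$ is exactly what makes $\nabla F$ meaningful there. The paper gives no proof of this statement and only cites Hofbauer and Sandholm, and your argument is essentially the standard one from that reference.
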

 \noindent In a strictly contractive game, the Nash equilibrium is unique \cite{sandholm2015population}.
 
\begin{definition}[Nash Stationarity \cite{sandholm2010population}]
A learning dynamics model $\dot{x}=\mathcal{V}(x,F(x))$
satisfies \emph{Nash stationarity} if
\[
\mathcal{V}(x^*,F(x^*))=0
\quad \iff \quad
x^* \in \mathrm{NE}(F).
\]
\end{definition}

Standard RD satisfies Nash stationarity in the interior of the simplex \cite{sandholm2009pairwise}. In particular, if $x^*\in\mathrm{Int}(\Delta_n)$ is a Nash equilibrium of the population game $F$, then
\(
F(x^*) \in \mathrm{span}(\mathbf{1}_n)
\)
\cite{sandholm2010population}. 
The payoff-based higher-order RD \eqref{eq:predictive_RD} also satisfies Nash stationarity on $\mathrm{Int}(\Delta_n)$ when the transfer function $h(s)$ is asymptotically stable, as stated in the following lemma.

\begin{lemma} \label{lemma:Nash_stationarity_higher_order}
Consider the payoff-based higher-order replicator dynamics \eqref{eq:predictive_RD}. 
If the transfer function $h(s)=C_h(sI_m-A_h)^{-1}B_h$ is asymptotically stable, then the induced strategy dynamics satisfy Nash stationarity on $\mathrm{Int}(\Delta_n)$.
\end{lemma}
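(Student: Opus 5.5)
We interpret ``the induced strategy dynamics'' as the closed-loop system obtained by feeding $p=F(x)$ into \eqref{eq:predictive_RD}, with state $(r,x_h)$ and output $x=\sigma(r+m)$. Nash stationarity on $\mathrm{Int}(\Delta_n)$ will mean that a state $x^*\in\mathrm{Int}(\Delta_n)$ is a rest point of the strategy dynamics (together with a compatible rest state of the internal variables) if and only if $x^*\in\mathrm{NE}(F)$.

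We will use two facts about the softmax. First, $\sigma(z+c\mathbf{1}_n)=\sigma(z)$ for all $c\in\mathbb{R}$. Second, on $\mathrm{Int}(\Delta_n)$ we have $\dot x=\nabla\sigma(z)\dot z$ with $\nabla\sigma(z)=\mathrm{diag}(x)-xx^\top$. For $x\in\mathrm{Int}(\Delta_n)$ this Jacobian is symmetric positive semidefinite, and its kernel is exactly $\mathrm{span}(\mathbf{1}_n)$. The kernel claim follows from $v^\top(\mathrm{diag}(x)-xx^\top)v=\sum_i x_i(v_i-x^\top v)^2$, which vanishes only when $v$ is a constant vector. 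Hence
\[
\dot x=0 \iff \dot z\in\mathrm{span}(\mathbf{1}_n),
\]
and the score dynamics need only be analyzed modulo $\mathbf{1}_n$.

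\emph{($\Leftarrow$).} Let $x^*\in\mathrm{Int}(\Delta_n)\cap\mathrm{NE}(F)$. Then $F(x^*)=c\mathbf{1}_n$ for some scalar $c$. Because $h$ is asymptotically stable, $A_h$ is Hurwitz and hence invertible. Take $x_h^*=-(A_h^{-1}B_h\otimes I_n)c\mathbf{1}_n$, which is an equilibrium of the $x_h$-subsystem under the constant input $p=c\mathbf{1}_n$. Its output is $m^*=-h(0)c\mathbf{1}_n\in\mathrm{span}(\mathbf{1}_n)$, where $h(0)=-C_hA_h^{-1}B_h$. Choose $r(0)$ so that $\sigma(r(0)+m^*)=x^*$, which is possible because $\sigma$ maps onto $\mathrm{Int}(\Delta_n)$. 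Along this solution $\dot r=c\mathbf{1}_n$ and $\dot m=0$. Therefore $\dot z\in\mathrm{span}(\mathbf{1}_n)$, so $x(t)\equiv x^*$ and $x^*$ is a rest point. The $r$-drift along $\mathbf{1}_n$ is invisible through $\sigma$, so it does not obstruct stationarity.

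\emph{($\Rightarrow$).} Suppose $x(t)\equiv x^*\in\mathrm{Int}(\Delta_n)$ along a solution. Then $p(t)\equiv F(x^*)=:p^*$ is constant. The $x_h$-subsystem is a Hurwitz LTI system driven by a constant input, so $x_h(t)$ converges to its steady state and $\dot m(t)\to0$; on the compatible rest state used in the definition, $\dot m\equiv 0$ exactly. Thus $\dot z=p^*+\dot m$ tends to $p^*$. Since $\dot x\equiv0$, we have $\dot z(t)\in\mathrm{span}(\mathbf{1}_n)$ for all $t$. This subspace is closed, so the limit $p^*$ also lies in $\mathrm{span}(\mathbf{1}_n)$, i.e., $F(x^*)=c\mathbf{1}_n$. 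Then $z^\top F(x^*)=c=x^{*\top}F(x^*)$ for every $z\in\Delta_n$, so $x^*\in\mathrm{NE}(F)$.

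\emph{Main obstacle.} The hard step is making precise what ``rest point'' means for a system whose internal state $(r,x_h)$ does not itself come to rest, since $r$ drifts along $\mathbf{1}_n$. We handle this by working modulo $\mathrm{span}(\mathbf{1}_n)$ and by using the Hurwitz property of $A_h$. That property is used in two places: it gives invertibility of $A_h$ and finiteness of $h(0)$ for the sufficiency direction, and it gives convergence of transients for the necessity direction. Without asymptotic stability, a marginally stable mode of $h$ could generate a persistent $\dot m\notin\mathrm{span}(\mathbf{1}_n)$ that cancels the payoff. For instance, a second integrator could produce rest points of $x$ that are not Nash equilibria, or could prevent Nash equilibria from being rest points. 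This is exactly the role of the hypothesis in Lemma~\ref{lemma:Nash_stationarity_higher_order}.
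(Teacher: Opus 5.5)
Your proof is correct and follows the route the paper takes implicitly. The paper omits the proof, but the equilibrium signals it builds in the proof of the local-convergence theorem ($r^*(t)=r_0+\alpha t\mathbf{1}_n$, $x_h^*=-\alpha(A_h^{-1}B_h)\otimes\mathbf{1}_n$, $m^*\in\mathrm{span}(\mathbf{1}_n)$) are exactly your sufficiency direction, and your necessity argument (constant payoff, $\dot m\to 0$ from the Hurwitz property, $\dot z(t)\in\mathrm{span}(\mathbf{1}_n)$ for all $t$, closedness of that subspace) soundly supplies the converse the paper leaves unstated.

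There are two harmless slips. First, $m^*=(C_h\otimes I_n)x_h^*=h(0)\,c\,\mathbf{1}_n$ rather than $-h(0)\,c\,\mathbf{1}_n$; this does not matter, since only membership in $\mathrm{span}(\mathbf{1}_n)$ is used. Second, deducing that $A_h$ is Hurwitz from asymptotic stability of $h(s)$ tacitly assumes a minimal realization, exactly as the paper does when it inverts $A_h$.
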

The proof is straightforward and omitted for brevity.

\subsection{Passivity and incremental stability}
In this subsection, we recall the notions of passivity and incremental stability used throughout the paper.

Consider the nonlinear system
\begin{equation}
\label{eq:state_space_model}
\begin{aligned}
\dot{x} &= f(x,u),\\
y &= h(x,u),
\end{aligned}
\end{equation}
where $x\in\mathbb{R}^n$ is the state, $u\in\mathbb{R}^m$ is the input, and $y\in\mathbb{R}^m$ is the output.

\begin{definition}[Passivity]
The system \eqref{eq:state_space_model} is passive  if there exists a continuously differentiable storage function $V:\mathbb{R}^n\rightarrow\mathbb{R}_+$ such that
\(
\dot V(x)\le u^\top y
\)
along all trajectories.
\end{definition}

For an LTI system
\(
\dot{x}=Ax+Bu,\quad y=Cx+Du,
\)
with transfer function $H(s)=C(sI-A)^{-1}B+D$, passivity is equivalent to
\(
H(j\omega)+H(j\omega)^*\succeq0
\quad \forall\,\omega\in\mathbb{R}.
\)
The system is strictly passive if
\(
H(j\omega)+H(j\omega)^*\succeq\delta I
\quad \forall\,\omega\in\mathbb{R}
\)
for some $\delta>0$. In the SISO case, passivity condition reduces to 
\(
\operatorname{Re}\{H(j\omega)\}\ge0,
\)
while strict passivity corresponds to
\(
\operatorname{Re}\{H(j\omega)\} > 0
\)
for all $\omega$.

\begin{definition}[Incremental Stability \cite{forni2013differential}]
Consider the autonomous system
\(
\dot{x}=f(x), \; x\in\mathcal{C},
\)
where $\mathcal{C}\subseteq\mathbb{R}^n$ is a forward-invariant  set. The system is \emph{incrementally stable} if there exists a class $\mathcal{K}$ function $\alpha$ such that for any two solutions $x(t;x_0)$ and $y(t;y_0)$ with initial conditions $x_0,y_0\in\mathcal{C}$,
\[
\|x(t;x_0)-y(t;y_0)\|
\le
\alpha(\|x_0-y_0\|).
\]

It is \emph{incrementally asymptotically stable} if it is incrementally stable and
\[
\lim_{t\to\infty}\|x(t;x_0)-y(t;y_0)\|=0 .
\]

It is \emph{exponentially incrementally stable} if there exist constants $M>0$ and $\lambda>0$ such that
\[
\|x(t;x_0)-y(t;y_0)\|
\le
M e^{-\lambda t}\|x_0-y_0\|.
\]
\end{definition}
Since the learning dynamics studied in this paper are characterized by a cascade interconnection between an LTI system and the softmax mapping, we recall two properties of $\sigma$ used in the analysis \cite{gao2017properties}. Its Jacobian is
\(
\nabla\sigma(v)=\mathrm{diag}(\sigma(v))-\sigma(v)\sigma(v)^\top ,
\)
which is symmetric positive semidefinite, with
\(
w^\top\nabla\sigma(v)w\ge0
\)
for all $w\in\mathbb{R}^n$, and equality if and only if $w\in\mathrm{span}\{\mathbf{1}_n\}$. Moreover,
\(
\sigma(v+c\,\mathbf{1}_n)=\sigma(v),\; \forall c\in\mathbb{R}.
\)

The following lemma will be used in the convergence analysis for matrix contractive games.

\begin{lemma}\label{lem:restriction_PD}
Let $N\in\mathbb{R}^{n\times(n-1)}$ be a matrix whose columns form an orthonormal basis of the tangent space $\mathcal{Z}$.
Let $F\in\mathbb{R}^{n\times n}$ be symmetric.
\begin{enumerate}
\item If $z^\top Fz>0$ for all $z\in \mathcal{Z}\setminus\{0\}$,
then $N^\top FN\succ0$.
\item If $z^\top Fz\ge0$ for all $z\in \mathcal{Z}$,
then $N^\top FN\succeq0$.
\end{enumerate}
\end{lemma}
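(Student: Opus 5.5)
The plan is to reduce both items to the behavior of the quadratic form $z^\top F z$ on $\mathcal{Z}$ through the parametrization $z = Nw$. Since the columns of $N$ form an orthonormal basis of $\mathcal{Z}$, the linear map $w \mapsto Nw$ is a bijection from $\mathbb{R}^{n-1}$ onto $\mathcal{Z}$. It is injective because $N^\top N = I_{n-1}$: if $Nw=0$ then $w = N^\top N w = 0$. Hence $Nw$ ranges over all of $\mathcal{Z}$ as $w$ ranges over $\mathbb{R}^{n-1}$, and $Nw \neq 0$ whenever $w \neq 0$. We also note that $N^\top F N$ is symmetric because $F$ is, so the statements $N^\top FN\succ 0$ and $N^\top FN\succeq 0$ make sense in the usual sense of symmetric matrices.

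For item 1, take any $w\in\mathbb{R}^{n-1}\setminus\{0\}$ and set $z=Nw$. Then $z\in\mathcal{Z}$ because $\mathbf{1}_n^\top N = 0$, and $z\neq 0$ by injectivity. Therefore
\[
w^\top (N^\top F N) w = (Nw)^\top F (Nw) = z^\top F z > 0,
\]
which is exactly the statement $N^\top F N \succ 0$.

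For item 2, the same identity applies to every $w\in\mathbb{R}^{n-1}$ with $z=Nw\in\mathcal{Z}$, giving $w^\top N^\top F N w = z^\top F z \ge 0$, that is, $N^\top FN\succeq 0$.

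There is no real obstacle in this argument. The only point needing care is the injectivity of $w\mapsto Nw$ in item 1, which guarantees that nonzero $w$ produce nonzero tangent vectors $z$, and this follows directly from $N^\top N = I_{n-1}$.
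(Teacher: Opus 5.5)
Your proof is correct and follows the same route as the paper: substitute $z = Ny$, use $y^\top (N^\top F N) y = z^\top F z$, and note that $y \neq 0$ implies $z \neq 0$. Your justification of injectivity via $N^\top N = I_{n-1}$ spells out a step that the paper attributes simply to the columns of $N$ forming a basis of $\mathcal{Z}$.
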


\begin{proof}
Let $y\in\mathbb{R}^{n-1}$ and define $z=Ny$.
Since the columns of $N$ span $\mathcal{Z}$, we have $z\in \mathcal{Z}$ and
$y\neq0$ implies $z\neq0$.
Then
\(
y^\top(N^\top FN)y
=
(Ny)^\top F(Ny)
=
z^\top Fz .
\)
The result follows directly from the definiteness of $F$ on $\mathcal{Z}$.
\end{proof}

\section{Local Convergence of Higher-Order Replicator Dynamics in Contractive Games}

We study the local stability of payoff-based higher-order replicator dynamics when implemented in contractive  games. The interaction between the learning dynamics and the game forms a feedback interconnection, where the learning dynamics maps payoffs to strategies and the game maps strategies to payoffs.

\begin{theorem}
Consider the payoff-based higher-order replicator dynamics \eqref{eq:predictive_RD} implemented in a contractive game. Let $x^* \in \mathrm{Int}(\Delta_n)$ be an isolated mixed-strategy Nash equilibrium. If the transfer function
\(
h(s)=C_h(sI_m-A_h)^{-1}B_h
\)
is strictly passive, then $x^*$ is locally asymptotically stable.
\end{theorem}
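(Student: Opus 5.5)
We prove the result by linearizing the closed loop at the equilibrium and building a quadratic Lyapunov function from passivity (KYP) certificates, after first removing the $\mathbf{1}_n$ direction, which $\sigma$ cannot see.

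\emph{Step 1: equilibrium and reduced coordinates.} Since $x^*\in\mathrm{Int}(\Delta_n)$ is a Nash equilibrium, $F(x^*)=c\mathbf{1}_n$. Because $h$ is strictly passive and (as we take it to include) asymptotically stable, $A_h$ is Hurwitz. Hence the equilibrium of the $x_h$ subsystem driven by the payoff is $x_h^*=-(A_h^{-1}B_h\otimes I_n)F(x^*)$, which lies in $\mathrm{span}(\mathbf{1})$ componentwise. The score $r$ drifts along $\mathbf{1}_n$, since $\dot r=c\mathbf{1}_n$ at equilibrium. As $\sigma(v+c\mathbf{1}_n)=\sigma(v)$, we project everything onto $\mathcal{Z}$ with $N$. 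Set $\tilde r=N^\top r$, $\tilde x_h=(I_m\otimes N^\top)x_h$, and $\tilde p=N^\top p$. The input $p$ enters only through $N^\top p$ after projection, because $A_h\otimes I_n$ commutes with $I_m\otimes N^\top$. The reduced closed loop is then
\[
\dot{\tilde r}=N^\top F(\sigma(N(\tilde r+\tilde m))),\qquad \dot{\tilde x}_h=(A_h\otimes I_{n-1})\tilde x_h+(B_h\otimes I_{n-1})\tilde p,
\]
where $\tilde m=(C_h\otimes I_{n-1})\tilde x_h$. Lemma~\ref{lemma:Nash_stationarity_higher_order} and the isolatedness of $x^*$ give an isolated equilibrium $(\tilde r^*,\tilde x_h^*)$ with $\tilde x_h^*=0$. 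Stability of this reduced system yields local asymptotic stability of $x^*$.

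\emph{Step 2: linearization.} Let $S=N^\top\nabla\sigma(z^*)N\succ0$. It is positive definite because $\nabla\sigma\succeq0$ with kernel $\mathrm{span}\{\mathbf{1}_n\}$. Let $J=N^\top\nabla F(x^*)N$, which satisfies $J+J^\top\preceq0$ by the Stable Games theorem together with Lemma~\ref{lem:restriction_PD} applied to the symmetric part. The linearization is the negative-feedback loop of the forward system $G(s)I_{n-1}$, with $G(s)=1/s+h(s)$, from $\tilde p$ to $\delta\tilde z$, followed by the static gain $S$, closed through $\tilde p=JS\,\delta\tilde z$. We choose the output variable $w=S\delta\tilde z$, which is the strategy deviation in reduced coordinates, so that $\tilde p=Jw$.

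\emph{Step 3: Lyapunov function.} By KYP, strict passivity of $h$ gives $P_h\succ0$ with
\[
A_h^\top P_h+P_hA_h\preceq -\epsilon I,\qquad P_hB_h=C_h^\top.
\]
For the integrator we take the storage $\tfrac12(\tilde r-\tilde r^*)^\top S(\cdot)$; more precisely we use the nonlinear storage $\Phi(\tilde r+\tilde m)$, where $\Phi$ is the convex log-sum-exp potential restricted to $\mathcal{Z}$, whose gradient is $\sigma$. The candidate is
\[
V=\Phi(N(\tilde r+\tilde m))-\Phi(z^*)-(x^*)^\top N(\cdots)+\tfrac12\tilde x_h^\top(P_h\otimes S^{-1})\tilde x_h,
\]
which is a Bregman-type divergence and is locally positive definite. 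Its derivative will be $(x-x^*)^\top(F(x)-F(x^*))$ plus a cross term that we cancel via $P_hB_h=C_h^\top$, plus a strictly negative term in $\tilde x_h$.

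If the nonlinear bookkeeping fails, we fall back to a purely linear argument. Write the linearized closed-loop matrix and show it is Hurwitz by a quadratic $V=\tfrac12\delta\tilde r^\top S\delta\tilde r+\dots$. This gives
\[
\dot V\le w^\top Jw-\epsilon\|\tilde x_h\|^2\le-\epsilon\|\tilde x_h\|^2,
\]
followed by LaSalle: $\tilde x_h\equiv0$ forces $\tilde p\equiv0$, hence $\dot{\tilde x}_h=B\tilde p=0$. Then $w$ is constant and $Jw=0$; by isolatedness, nonsingularity of $J$ on $\mathcal{Z}$ forces $w=0$. The same Lyapunov argument, combined with the Hurwitz linearization, gives local asymptotic stability of the nonlinear system.

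\emph{Main obstacle.} The hardest part is making the cross terms cancel with the weighting $S$. The matrix $S$ appears between the LTI block and the game, so $h$ must be combined with $S$ through the Kronecker structure $P_h\otimes S$ (or $S^{-1}$). We also need invertibility of $J$ on $\mathcal{Z}$ from the isolatedness of $x^*$, to exclude the marginal directions in which $J$ is singular but only negative semidefinite. We would try first to justify this via the Jacobian of the reduced vector field at the equilibrium being nonsingular.
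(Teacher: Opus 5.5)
Your fallback linear argument is the paper's proof. Both reduce to $\mathcal{Z}$, use $V=\tfrac12\delta\tilde r^\top S\,\delta\tilde r+\tfrac12\delta\tilde x_h^\top P\,\delta\tilde x_h$, cancel the cross term with a KYP identity so that $\dot V\le w^\top Jw-c\|\delta\tilde x_h\|^2$, and finish with LaSalle. Your Step~1 passes to the autonomous system in $(\tilde r,\tilde x_h)$ before linearizing. That gives a genuine equilibrium more cleanly than the paper's linearization about the drifting trajectory $r^*(t)=r_0+\alpha t\mathbf{1}_n$, and the two linearizations coincide.

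You must commit to the weighting $P=P_h\otimes S$. With it, $\delta\tilde x_h^\top P(B_h\otimes I_{n-1})\tilde p=\delta\tilde m^\top S\tilde p$, which combines with $\delta\tilde r^\top S\tilde p$ into $w^\top\tilde p=w^\top Jw$, and $c=\tfrac{\epsilon}{2}\lambda_{\min}(S)$. With $P_h\otimes S^{-1}$ you are left with the indefinite sum $\delta\tilde r^\top S\tilde p+\delta\tilde m^\top S^{-1}\tilde p$. $P_h\otimes S$ is exactly the certificate the paper obtains by applying KYP to $h(s)\tilde Q$ directly.

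Drop the Bregman candidate. Differentiating $\Phi$ at $N(\tilde r+\tilde m)$ produces $(x-x^*)^\top N(C_hA_h\otimes I_{n-1})\tilde x_h$. This term is bilinear in $x-x^*$ and $\tilde x_h$, and the identity $P_hB_h=C_h^\top$ does not touch it. Nothing negative in $x-x^*$ can dominate it in lossless games such as RPS.

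The obstacle you flagged is a real gap. Invertibility of $J$ does not follow from isolatedness. Your plan to get it from nonsingularity of the reduced Jacobian is circular, because that Jacobian is nonsingular exactly when $J$ is: if $Jv=0$, then $(S^{-1}v,0)$ lies in its kernel.

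For example, take $n=2$ and $F(x)=(x_1-\tfrac12)^3\,(-1,\,1)^\top$. This game is strictly contractive with unique interior Nash equilibrium $x^*=(\tfrac12,\tfrac12)$. Yet $\nabla F(x^*)=0$, so $J=0$ and $\delta\dot{\tilde r}\equiv0$ in the linearization. No linearization argument can then give asymptotic stability.

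The paper's proof uses the same fact without comment, when its LaSalle step passes from $\delta p=0$ to $\delta x=0$. So both your argument and the paper's establish the result only under the extra regularity hypothesis that $N^\top\nabla F(x^*)N$ is nonsingular. That is presumably what ``isolated'' is meant to convey. State it explicitly as an assumption rather than trying to derive it.
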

\begin{proof}
Consider the payoff-based higher-order replicator dynamics
\eqref{eq:predictive_RD} interconnected with the  game
$p = F(x)$. Since the dynamics satisfies Nash stationarity in
$\mathrm{Int}(\Delta_n)$, any Nash equilibrium $x^*\in\mathrm{Int}(\Delta_n)$ satisfies \( p^* = F(x^*) = \alpha \mathbf{1}_n \)
for some $\alpha\in\mathbb{R}$ \cite{sandholm2010population}.
Using the shift-invariance property of the softmax mapping, the
equilibrium signals can be written as
\(
r^*(t) = r_0 + \alpha t\,\mathbf{1}_n,
\;
x_h^* = -\alpha(A_h^{-1}B_h)\otimes \mathbf{1}_n,
\)
\(
m^* = -\alpha(C_hA_h^{-1}B_h)\mathbf{1}_n,
\;
x^* = \sigma(r_0).
\)
Choosing $r(0)=\log x^* + c\mathbf{1}_n$ for some $c\in\mathbb{R}$
ensures $x^*=\sigma(r_0)$. Linearizing the closed-loop dynamics around the equilibrium trajectroy yields
\[
\begin{aligned}
\dot{\Delta r} &= \Delta p,\\
\dot{\Delta x_h} &= (A_h\otimes I_n)\Delta x_h + (B_h\otimes I_n)\Delta p,\\
\Delta m &= (C_h\otimes I_n)\Delta x_h,\\
\Delta x &= Q(\Delta r+\Delta m),\\
\Delta p &= \nabla F(x^*)\,\Delta x,
\end{aligned}
\]
where $Q = \nabla\sigma(r_0) = \mathrm{diag}(x^*) - x^*{x^*}^\top$.
Admissible deviations of the strategy lie in the tangent space
$\mathcal{Z}$. Let
\(
\Delta r = N\delta r,\quad
\Delta p = N\delta p,\quad
\Delta x = N\delta x,\quad
\Delta m = N\delta m,\quad
\Delta x_h = \mathcal N \delta x_h,
\)
where $\mathcal N = I_m\otimes N$.
Substituting into the linearized dynamics gives
\begin{equation*}
\begin{aligned}
\dot{\delta r} &= \delta p, &
\tilde A_h &= A_h \otimes I_{n-1},\\
\dot{\delta x_h} &= \tilde A_h \delta x_h + \tilde B_h \delta p, &
\tilde B_h &= B_h \otimes I_{n-1},\\
\delta m &= \tilde C_h \delta x_h, &
\tilde C_h &= C_h \otimes I_{n-1},\\
\delta x &= \tilde Q(\delta r+\delta m), &
\tilde Q &= N^\top Q N,\\
\delta p &= \tilde F \delta x, &
\tilde F &= N^\top \nabla F(x^*) N .
\end{aligned}
\end{equation*}
Since $Q\succeq0$ and $\ker(Q)=\mathrm{span}(\mathbf{1}_n)$,
$Q$ is positive definite on $\mathcal{Z}$. Hence, by Lemma~\ref{lem:restriction_PD}, $\tilde Q \succ0$. Define
\(
\delta x_I = \tilde Q\,\delta r, \quad
\delta x_P = \tilde Q\,\delta m,
\)
so that $\delta x = \delta x_I + \delta x_P$.
The resulting linearized dynamics can be interpreted as a feedback interconnection between the game $\tilde F$ and a linear system composed of the parallel connection of the integrator subsystem and the auxiliary LTI subsystem (see Fig.~\ref{fig:linearized_local}). 
\begin{figure}[H]
    \centering
    \includegraphics[width=0.4\linewidth]{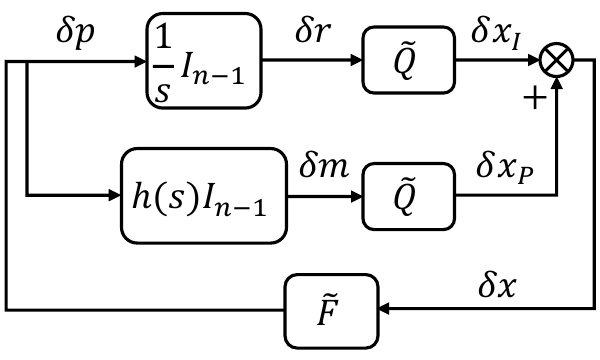}
    \caption{Block diagram of the linearized local dynamics.}
    \label{fig:linearized_local}
\end{figure}

The  auxiliary subsystem has transfer function \( H_Q(s) = h(s)\tilde Q .\)
Since $h(s)$ is strictly passive and $\tilde Q\succ0$,
\(
H_Q(j\omega)+H_Q(j\omega)^*
= 2\,\mathrm{Re}(h(j\omega))\,\tilde Q .
\)
Since $\mathrm{Re}(h(j\omega))\ge\alpha>0$, $\lambda_{\min}(H_{\mathrm{Q}}(j\omega)+H_{\mathrm{Q}}^*(j\omega)) \geq 2 \alpha \lambda_{\min}(\tilde{Q})= \delta>0$. Thus the auxiliary subsystem is strictly passive. By the KYP lemma \cite[Lemma~6.3]{khalil2002nonlinear}, there exist matrices $P\succ0$, $L$, and $\epsilon>0$ satisfying
\(
P\tilde A_h + \tilde A_h^\top P
= -L^\top L - \epsilon P,
\quad
P\tilde B_h = \tilde C_h^\top \tilde Q .
\)

Consider the Lyapunov function candidate
\[
V(\delta r,\delta x_h)
= \tfrac12\,\delta r^\top \tilde Q\,\delta r
+ \tfrac12\,\delta x_h^\top P\,\delta x_h .
\]
Its derivative along trajectories is
\[
\begin{aligned}
\dot V
&= \delta p^\top \tilde Q\,\delta r
+ \delta x_h^\top (P\tilde A_h+\tilde A_h^\top P)\delta x_h
+ \delta x_h^\top \tilde C_h^\top \tilde Q\,\delta p \\
&= \delta p^\top \delta x
- \delta x_h^\top L^\top L\,\delta x_h
- \epsilon\,\delta x_h^\top P\,\delta x_h .
\end{aligned}
\]
Using $\delta p=\tilde F\delta x$ and contractiveness of the game,
$\delta x^\top \tilde F \delta x \le 0$, which yields
\(
\dot V \le -\epsilon\,\delta x_h^\top P\,\delta x_h \le 0 .
\)
Hence $\dot V=0$ only when $\delta x_h=\mathbf{0}_{m(n-1)}$. By LaSalle's invariance principle, the largest invariant set is $(\delta r,\mathbf 0_{m(n-1)})$. On this set, $\delta p=\mathbf{0}_{n-1}$ and $\delta x=\mathbf{0}_{n-1}$, which implies $\delta r=\mathbf{0}_{n-1}$. Therefore $x^*$ is locally asymptotically stable.
\end{proof}
\begin{example}
Consider the Rock–Paper–Scissors (RPS) game, 
\[
F=\begin{bmatrix}
0 & -1 & 1\\
1 & 0 & -1\\
-1 & 1 & 0
\end{bmatrix}.
\]
The game has a unique Nash equilibrium $x^*=(1/3,1/3,1/3)$. Moreover, it is lossless since $z^\top F z=0$ for all $z\in \mathcal{Z}$. It is well known that the standard replicator dynamics fails to converge in the RPS game \cite{arslan2006anticipatory}. In contrast, the payoff-based higher-order replicator dynamics~\eqref{eq:predictive_RD} converges locally to the Nash equilibrium for any strictly passive $h(s)$. For example, with $h(s)=\tfrac{2s+3}{s^2+3s+2}$, Fig.~\ref{fig:convergence_RPS} shows that the Nash equilibrium is locally asymptotically stable.
\end{example}

\begin{figure}[H]
\centering
\includegraphics[width=0.7\linewidth]{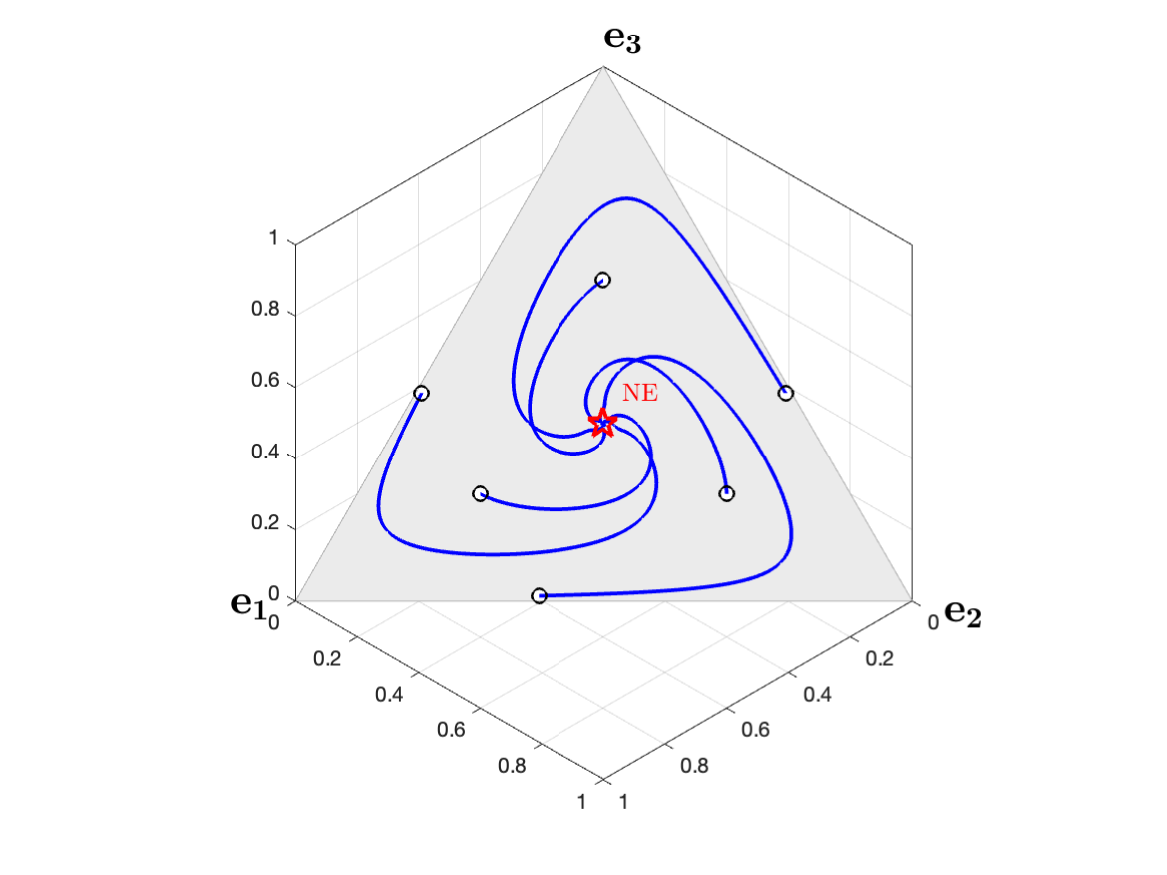}
\caption{Local convergence of payoff-based higher-order replicator dynamics for $h(s)=\tfrac{2s+3}{s^2+3s+2}$ in the RPS game.}
\label{fig:convergence_RPS}
\end{figure}

\section{Global Convergence of Higher-order replicator dynamics in matrix contractive games}

In this section, we study the convergence properties of a broad class of learning dynamics characterized by a cascade interconnection between the diagonal transfer function matrix $G(s)=g(s)I_n$, where $g(s)$ is passive, and the softmax mapping $\sigma(\cdot)$. This class includes several learning models such as the payoff-based higher-order replicator dynamics \eqref{eq:predictive_RD}, where $g(s)=(1/s)+h(s)$, replicator dynamics under perturbed or delayed payoff observations, and exponential replicator dynamics
\cite{gao2020passivity}.

We first establish a general result that relates the stability of the
variational dynamics of a nonlinear system to incremental stability of the nonlinear flow. In particular, we show that if a common quadratic Lyapunov function certifies stability of all trajectory linearizations, then the nonlinear system is globally incrementally stable.

\begin{theorem}\label{thm:incremental_stability}
Consider the nonlinear system
\(
\dot x = f(x),
\; x\in\mathbb{R}^n,
\)
with Jacobian $A(x)=\nabla f(x)$. Suppose there exists a symmetric matrix $P\succ0$ such that one of the following holds:
\begin{enumerate}
\item \textbf{Exponential case.}
If there exists $Q\succ0$ satisfying
\begin{equation}
PA(x)+A(x)^\top P \preceq -Q,
\qquad \forall x\in\mathbb{R}^n,
\label{eq:contraction_cond_exp}
\end{equation}
then the system is globally exponentially incrementally stable.

\item \textbf{Asymptotic case.}
If there exists a continuous matrix function
$Q:\mathbb{R}^n\to\mathbb{S}_+^n$ such that
\begin{equation}
PA(x)+A(x)^\top P \preceq -Q(x),
\qquad \forall x\in\mathbb{R}^n,
\label{eq:contraction_cond_asym}
\end{equation}
then the system is globally incrementally asymptotically stable, provided the largest invariant set contained in
\[
\mathcal{E}
=
\Big\{(y,e):
e^\top\!\Big(\int_0^1 Q(y+\theta e)\,d\theta\Big)e=0
\Big\}
\]
is $\{(y,e):e=0\}$.
\end{enumerate}
If an equilibrium $x^\star$ exists, then it inherits the corresponding global stability property.
\end{theorem}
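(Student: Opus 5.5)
We will work directly with the difference of two solutions and use the weighted quadratic $V(e)=e^\top P e$ as an incremental Lyapunov function. Let $x(t)$ and $y(t)$ be two solutions and set $e(t)=x(t)-y(t)$. By the mean value theorem in integral form (valid since $f$ is $C^1$ on the convex set $\mathbb{R}^n$),
\[
\dot e = f(y+e)-f(y) = \Big(\int_0^1 A(y+\theta e)\,d\theta\Big) e =: \bar A(y,e)\,e .
\]
Differentiating $V(e)=e^\top P e$ along trajectories then gives
\[
\dot V = e^\top\big(P\bar A+\bar A^\top P\big)e=\int_0^1 e^\top\big(PA(y+\theta e)+A(y+\theta e)^\top P\big)e\,d\theta .
\]
Here we pull the integral outside, which is legitimate because $P$ is constant. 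This pointwise-in-$\theta$ use of the hypothesis is the step where the \emph{common} (state-independent) $P$ is essential.

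\textbf{Exponential case.} From \eqref{eq:contraction_cond_exp} we get $\dot V\le -e^\top Q e\le -\tfrac{\lambda_{\min}(Q)}{\lambda_{\max}(P)}V$. By the comparison lemma, $V(t)\le e^{-2\lambda t}V(0)$ with $2\lambda=\lambda_{\min}(Q)/\lambda_{\max}(P)$. The sandwich $\lambda_{\min}(P)\|e\|^2\le V\le\lambda_{\max}(P)\|e\|^2$ then yields $\|e(t)\|\le\sqrt{\lambda_{\max}(P)/\lambda_{\min}(P)}\,e^{-\lambda t}\|e(0)\|$, which is the claimed bound with $M=\sqrt{\kappa(P)}$.

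\textbf{Asymptotic case.} From \eqref{eq:contraction_cond_asym} we get
\[
\dot V\le -e^\top\Big(\int_0^1 Q(y+\theta e)\,d\theta\Big)e\le 0 .
\]
Incremental stability then follows from the same sandwich bound, with $\alpha(r)=\sqrt{\kappa(P)}\,r$. For convergence we consider the augmented autonomous system in the states $(y,e)$:
\[
\dot y=f(y),\qquad \dot e=f(y+e)-f(y).
\]
On this system $V$ is nonincreasing, and its derivative is bounded above by the continuous nonpositive function $-e^\top\bar Q(y,e)e$, where $\bar Q(y,e)=\int_0^1 Q(y+\theta e)\,d\theta$.

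We then apply a LaSalle-type argument. The sets $\{V\le c\}$ bound $e$ but not $y$, so LaSalle's principle cannot be used naively on the unbounded $(y,e)$-space. This is the main obstacle. To handle it, we will assume (as will hold in the intended application, where the state is a softmax/score variable on a forward-invariant set, or by restricting to solutions that remain bounded) that the $\omega$-limit set of $(y(t),e(t))$ is nonempty and compact. Under that assumption we use the version of LaSalle's principle with $\dot V\le -W\le 0$ (e.g., \cite[Thm.~4.4]{khalil2002nonlinear}). Trajectories approach the largest invariant set contained in $\mathcal{E}=\{W=0\}$. By hypothesis this set is $\{e=0\}$, hence $\|e(t)\|\to0$. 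If bounded $y$-trajectories cannot be guaranteed, we fall back on Barbalat's lemma: $\int_0^\infty e^\top\bar Q e\,dt<\infty$, and uniform continuity of the integrand along the trajectory still forces convergence into $\mathcal{E}$. From there we argue via the invariance hypothesis on limiting trajectories.

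Finally, if an equilibrium $x^\star$ exists, we take $y(t)\equiv x^\star$. The estimates above then specialize to global exponential (respectively global asymptotic) stability of $x^\star$, and Lyapunov stability follows from the incremental bound.
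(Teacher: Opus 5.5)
Your exponential case is correct and more self-contained than the paper's, which simply invokes the Lohmiller--Slotine contraction theorem for the constant metric $P$. Your comparison-lemma argument also produces explicit constants $M=\sqrt{\kappa(P)}$ and $\lambda=\lambda_{\min}(Q)/(2\lambda_{\max}(P))$.

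In the asymptotic case you follow the paper's route: the mean-value error system, $V=e^\top Pe$, and LaSalle on the augmented $(y,e)$-system. You correctly identify the step that the paper's proof performs without justification. Sublevel sets of $V$ bound $e$ but not $y$, so there is no compact positively invariant set on which LaSalle's principle can be applied.

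However, neither of your remedies proves the statement as written. Precompactness of $(y(t),e(t))$ is an additional hypothesis. It is also not automatic in the intended application: when $g(s)$ contains an integrator and $F(x^*)=\alpha\mathbf 1_n$ with $\alpha\neq 0$, the score drifts along $\mathbf 1_n$. Compare the paper's own $r^*(t)=r_0+\alpha t\,\mathbf 1_n$, where $\alpha=-13/11$ in the congestion example. The Barbalat fallback yields at most $e^\top\bar Q(y,e)\,e\to0$ along the trajectory. If $y(t)$ escapes to infinity there are no $\omega$-limit points, hence no ``limiting trajectories'' to which the invariance hypothesis could be applied.

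In fact the gap cannot be closed, because the asymptotic claim is false as stated. Take $n=1$, $f(x)=1+e^{-x}$, $P=1$, and $Q(x)=2e^{-x}$. Then $PA+A^\top P=-Q(x)$. Since $\int_0^1Q(y+\theta e)\,d\theta>0$, we get $\mathcal E=\{e=0\}$, which is invariant, so every hypothesis holds. Yet the solutions are $x(t)=\ln\big((e^{x_0}+1)e^t-1\big)$. Hence $x(t)-y(t)\to\ln\frac{e^{x_0}+1}{e^{y_0}+1}\neq0$ whenever $x_0\neq y_0$. Along the escaping trajectories $Q$ decays fast enough that $\int_0^\infty Q(y(t))\,dt<\infty$.

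Three things do survive:
\begin{itemize}
\item the incremental bound $\|e(t)\|\le\sqrt{\kappa(P)}\,\|e(0)\|$;
\item your conditional result under precompactness;
\item the final equilibrium clause. With $y\equiv x^\star$ the augmented trajectory stays in the compact positively invariant set $\{x^\star\}\times\{e:V(e)\le V(e(0))\}$. LaSalle then applies legitimately and gives global asymptotic stability of $x^\star$.
\end{itemize}
The asymptotic case should therefore be stated with an explicit boundedness (precompactness) hypothesis on the trajectories, rather than proved as written.
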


\begin{proof}
\textbf{(1) Exponential case.}
If \eqref{eq:contraction_cond_exp} holds, then the system is contracting with respect to the constant metric $P\succ0$. Hence global exponential incremental stability follows from Theorem 2 in \cite{lohmiller1998contraction}. Moreover, if an equilibrium $x^\star$ exists, then setting $y(t)\equiv x^\star$, which is a solution of the system, implies global exponential stability of $x^\star$.

\textbf{(2) Asymptotic case.}
Let $x(\cdot)$ and $y(\cdot)$ be two solutions and define the error
\(
e=x-y.
\)
Using the mean-value form of the Jacobian,
\[
\dot e
=
f(x)-f(y)
=
\widehat A(t)e,
\qquad
\widehat A(t)
=
\int_0^1 A(y+\theta e)\,d\theta .
\]
Consider the Lyapunov function
\(
V(e)=e^\top P e,
\)
where $P\succ0$. Then
\begin{align*}
    \dot V &=
e^\top\!\big(P\widehat A+\widehat A^\top P\big)e\\
&=
e^\top\!\left(\int_0^1
\big(PA(y+\theta e)+A(y+\theta e)^\top P\big)\,d\theta\right)e .
\end{align*}
By \eqref{eq:contraction_cond_asym},
\[
P\widehat A+\widehat A^\top P
\preceq
-\int_0^1 Q(y+\theta e)\,d\theta .
\]
Hence
\[
\dot V
\le
-
e^\top\!\left(\int_0^1 Q(y+\theta e)\,d\theta\right)e
\le 0 .
\]
Therefore \(V\) is nonincreasing. The set where \(\dot V=0\) is
\[
\mathcal{E}
=
\Big\{(y,e):
e^\top\!\Big(\int_0^1 Q(y+\theta e)\,d\theta\Big)e=0
\Big\}.
\]
By LaSalle's invariance principle, trajectories converge to the largest invariant subset of \(\mathcal{E}\). By assumption, this invariant set is \(\{(y,e):e=0\}\), so \(e(t)\to0\). Thus the system is globally incrementally asymptotically stable. Finally, if an equilibrium \(x^\star\) exists, taking \(y(t)\equiv x^\star\) gives global asymptotic stability of \(x^\star\).
\end{proof}

Theorem~\ref{thm:incremental_stability} characterizes incremental
stability of a nonlinear system in terms of a common quadratic Lyapunov function for all trajectory linearizations. We now apply this result to establish global convergence of the learning dynamics obtained by connecting $G(s)I_n$ with $\sigma(\cdot)$ in symmetric matrix contractive games (see Fig.~\ref{fig:implementation_in_matrix_game}).

\begin{figure}[H]
    \centering
    \includegraphics[width=0.4\linewidth]{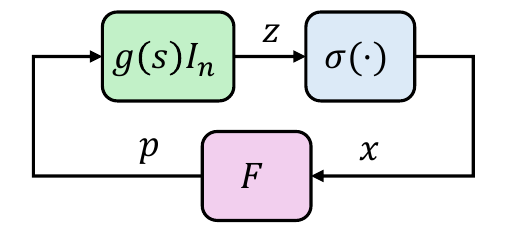}
    \caption{Implementation of generalized RD in a matrix game.}
    \label{fig:implementation_in_matrix_game}
\end{figure}

\begin{theorem}\label{thrm:general_LD_convergence}
Consider the learning dynamics obtained from the cascade
interconnection of a transfer function
\(
G(s)=g(s)I_n
\)
with the softmax mapping $\sigma(\cdot)$, and suppose the dynamics
satisfy Nash stationarity. When implemented in a symmetric matrix
strictly contractive game
\[
p = Fx, \qquad F = F^\top, \qquad
z^\top F z < 0 \;\; \forall z \in \mathcal{Z} \setminus \{0\},
\]
the following hold:
\begin{enumerate}
\item If $g(s)$ is passive, the unique Nash equilibrium is  globally asymptotically stable.

\item If $g(s)$ is strictly passive, the the unique NE is globally exponentially stable.
\end{enumerate}
\end{theorem}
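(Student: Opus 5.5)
We will apply Theorem~\ref{thm:incremental_stability} to the closed loop after removing the direction $\mathbf{1}_n$, which the softmax ignores. Write a minimal realization $g(s)=c^\top(sI-A)^{-1}b+d$. Then $G(s)=g(s)I_n$ is realized by $\dot\xi=(A\otimes I_n)\xi+(b\otimes I_n)p$ and $z=(c^\top\otimes I_n)\xi+d\,p$. When $g$ has a pole at the origin, as for $1/s+h(s)$, we keep that integrator state.

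The closed loop is $\dot\xi=(A\otimes I_n)\xi+(b\otimes I_n)F\sigma(z)$, with $z$ defined implicitly if $d\neq 0$. For simplicity take $d=0$; the case $d>0$ only adds a nonnegative feedthrough term.

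\textbf{Step 1 (reduction).} Since $\sigma(z+c\mathbf{1}_n)=\sigma(z)$ and $\nabla\sigma(z)\mathbf{1}_n=0$, the component of $\xi$ along $\mathbf{1}_n$ does not affect $x$. We therefore project with $\mathcal N=I\otimes N$ and set $\eta=\mathcal N^\top\xi$, giving dynamics on $\mathbb{R}^{k(n-1)}$. In the payoff, the $\mathbf{1}_n$ component of $Fx$ is also irrelevant after projection: only $N^\top F x$ enters, because $x$ depends only on $N^\top z$.

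The Jacobian of the reduced vector field is
\[
\mathcal A(\eta)=\tilde A+\tilde B\,\tilde F\,\tilde Q(\eta)\,\tilde C,
\]
where $\tilde A=A\otimes I_{n-1}$ and similarly for $\tilde B,\tilde C$. Here $\tilde F=N^\top FN\prec 0$ by Lemma~\ref{lem:restriction_PD} and strict contractiveness, and $\tilde Q(\eta)=N^\top\nabla\sigma(z)N\succ0$.

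\textbf{Step 2 (common quadratic Lyapunov function).} This is the key step and the main obstacle. The metric $P$ must be independent of $\eta$, while $\tilde Q(\eta)$ varies. Symmetry of $F$ is what makes a constant metric possible.

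First use passivity of $g$ in KYP form. There is $P_0\succ0$ with $P_0A+A^\top P_0\preceq 0$ and $P_0b=c$; in the strict case, $P_0A+A^\top P_0\preceq-\epsilon P_0$. Then choose the weighted metric
\[
P=P_0\otimes(-\tilde F)\succ0.
\]
Using $P_0b=c$ and $\tilde F^\top=\tilde F$, we get $P\tilde B=c\otimes(-\tilde F)=-\tilde C^\top\tilde F$. Hence
\[
P\mathcal A+\mathcal A^\top P=(P_0A+A^\top P_0)\otimes(-\tilde F)\;-\;2\,\tilde C^\top\tilde F\,\tilde Q(\eta)\,\tilde F\,\tilde C .
\]
Both terms are $\preceq0$ because $-\tilde F\succ0$ and $\tilde Q\succ0$.

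Symmetry of $F$ is used exactly to turn the cross term into the congruence $\tilde F\tilde Q\tilde F$, which is positive semidefinite for every $\eta$.

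\textbf{Step 3a (strict passivity).} If $g$ is strictly passive, the first term is $\preceq-\epsilon P_0\otimes(-\tilde F)\preceq -Q$ with $Q\succ0$ constant. Part~1 of Theorem~\ref{thm:incremental_stability} then gives global exponential incremental stability.

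\textbf{Step 3b (passivity only).} If $g$ is only passive, take
\[
Q(\eta)=L^\top L\otimes(-\tilde F)+2\tilde C^\top\tilde F\tilde Q(\eta)\tilde F\tilde C ,
\]
which is continuous, with $L$ from the KYP lemma. We must verify the LaSalle condition of part~2. On the set where $e^\top\!\int_0^1Q\,e=0$, positivity of $\tilde F\tilde Q\tilde F$ forces $\tilde C e=0$ along the invariant trajectory.

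The error then obeys $\dot e=\tilde A e+\tilde B(\cdot)$, and the input term is $\tilde B\tilde F\int\tilde Q\,\tilde C e=0$. So $\dot e=\tilde A e$ with output $\tilde C e\equiv0$. Minimality (observability) of $(A,c^\top)$ gives $e\equiv0$.

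One caveat: if $g$ has a pole on the imaginary axis, such as the integrator, the relevant state is still observable, so this argument covers it.

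\textbf{Step 4 (conclusion).} Nash stationarity implies that the unique NE $x^*\in\mathrm{Int}(\Delta_n)$ corresponds to a (reduced) equilibrium, or to an equilibrium trajectory of the form used in the local proof. Taking $y(t)$ to be that trajectory and invoking the final clause of Theorem~\ref{thm:incremental_stability} yields global asymptotic convergence in the passive case and global exponential convergence in the strictly passive case. Convergence transfers from $\eta$ to $x=\sigma(z)$ because $\sigma$ is Lipschitz.
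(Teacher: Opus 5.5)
Your overall route is the paper's: reduce to $\mathcal Z$, build a constant metric from the KYP lemma using symmetry of $F$, invoke Theorem~\ref{thm:incremental_stability}, and close the passive case with observability. However, the key computation in Step~2 is wrong. With $P=P_0\otimes(-\tilde F)$ you correctly obtain $P\tilde B=-\tilde C^\top\tilde F$, but then
\[
P\tilde B\,\tilde F\,\tilde Q(\eta)\,\tilde C=-\tilde C^\top\tilde F^{2}\,\tilde Q(\eta)\,\tilde C ,
\]
not $-\tilde C^\top\tilde F\tilde Q(\eta)\tilde F\tilde C$. Both copies of $\tilde F$ sit to the left of $\tilde Q$, so there is no congruence. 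The symmetric part of the cross term is $-(cc^\top)\otimes\big(\tilde F^2\tilde Q+\tilde Q\tilde F^2\big)$, and $M\tilde Q+\tilde QM\succeq0$ for every $\tilde Q\succ0$ only if $M$ is a nonnegative multiple of the identity.

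This genuinely fails. For $g(s)=1/s$ the first term vanishes and $P\mathcal A+\mathcal A^\top P=-(\tilde F^2\tilde Q+\tilde Q\tilde F^2)$. As $\sigma(z)$ approaches the relative interior of the edge of $\Delta_n$ through $e_i,e_j$, $\tilde Q(\eta)$ tends to a positive multiple of $vv^\top$ with $v=N^\top(e_i-e_j)$. Then $\tilde F^2\tilde Q+\tilde Q\tilde F^2$ acquires a negative eigenvalue whenever $v$ is not an eigenvector of $\tilde F^2$. For $n=3$ this happens on at least two of the three edges unless $\tilde F$ is a multiple of $I_2$, which is false, e.g., in Example~\ref{ex:congestion_game}. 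So your $P$ is not a common Lyapunov metric, and Steps~3a--3b, which reuse the same expression, inherit the error.

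The repair is to invert the weight: take $P=P_0\otimes(-\tilde F)^{-1}$, which is symmetric positive definite precisely because $F=F^\top$ and $\tilde F\prec0$. Then $P\tilde B\tilde F=c\otimes(-I_{n-1})=-\tilde C^\top$, so
\[
P\mathcal A+\mathcal A^\top P=(P_0A+A^\top P_0)\otimes(-\tilde F)^{-1}-2\,\tilde C^\top\tilde Q(\eta)\,\tilde C\preceq0 .
\]
In the strict case the first term is $\preceq-\epsilon P_0\otimes(-\tilde F)^{-1}\prec0$. Step~3b then works with $Q(\eta)=L^\top L\otimes(-\tilde F)^{-1}+2\tilde C^\top\tilde Q(\eta)\tilde C$, and your observability argument is unchanged.

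The identity $-P\tilde B\tilde F=\tilde C^\top$ is exactly what the paper extracts from the positive-real/KYP lemma applied to $H(s)=-g(s)\tilde F$, so the corrected proof coincides with the paper's. Your version still adds three things:
\begin{itemize}
\item an explicit Kronecker formula for the metric, obtained from the scalar KYP lemma;
\item a reduction of the nonlinear system itself to $\mathbb R^{k(n-1)}$ before applying Theorem~\ref{thm:incremental_stability}. The paper reduces only the linearization, and when $g$ contains an integrator the unreduced system cannot be incrementally asymptotically stable, since shifting that state by $c\mathbf 1_n$ leaves $x$ unchanged;
\item a direct check of the theorem's invariant-set hypothesis, rather than LaSalle applied to a time-varying linearization.
\end{itemize}
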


\begin{proof}
Let $g(s)$ admit a state--space realization $(A,B,C,D)$ with
$A\in\mathbb{R}^{m\times m}$,
$B\in\mathbb{R}^{m\times 1}$,
$C\in\mathbb{R}^{1\times m}$.
We take $D=0$ to avoid direct feedthrough from $p$ to $z$.\footnote{If $D\neq0$, the output $z$ depends instantaneously on $p$, creating an algebraic loop because both $F$ and $\sigma(\cdot)$ are static maps.}.  The dynamics of $G(s)$ are
\begin{align*}
\dot{\xi} &= (A\otimes I_n)\xi + (B\otimes I_n)p,\\
z &= (C\otimes I_n)\xi .
\end{align*}
The closed-loop system is therefore
\begin{align*}
\dot{\xi} &= (A\otimes I_n)\xi + (B\otimes I_n)p,\\
z &= (C\otimes I_n)\xi,\\
x &= \sigma(z),\\
p &= Fx .
\end{align*}

Linearizing around a trajectory
$(\bar\xi,\bar z,\bar x,\bar p)$ yields
\begin{align*}
\Delta \dot{\xi} &= (A\otimes I_n)\Delta\xi + (B\otimes I_n)\Delta p,\\
\Delta z &= (C\otimes I_n)\Delta\xi,\\
\Delta x &= Q(\bar x)\Delta z,\\
\Delta p &= F\Delta x,
\end{align*}
where 
\(
Q(\bar x)=\operatorname{diag}(\bar x)-\bar x\bar x^\top
\) is evaluated along the trajectory $\bar x(t)$ and is therefore time-varying.  Because admissible strategy deviations lie in the tangent space
$\mathcal{Z}$, write $\Delta x=N\delta x$ where
$N\in\mathbb{R}^{n\times(n-1)}$ satisfies
$N^\top N=I_{n-1}$ and $N^\top\mathbf{1}_n=0$.
Define $\mathcal N=I_m\otimes N$ and
$\delta\xi=\mathcal N^\top\Delta\xi$.
The reduced dynamics become
\begin{equation*}
\begin{aligned}
\delta \dot{\xi} &= \tilde A\,\delta \xi + \tilde B\,\tilde F\,\delta x, &
\tilde A &= A \otimes I_{n-1},\\
\delta z &= \tilde C\,\delta \xi, &
\tilde B &= B \otimes I_{n-1},\\
\delta x &= \tilde Q(\bar x)\,\delta z, &
\tilde C &= C \otimes I_{n-1},
\end{aligned}
\end{equation*}
where
\(
\tilde F = N^\top F N, \) and 
\(
\tilde Q(\bar x)=N^\top Q(\bar x)N .
\)
Since $Q(\bar{x})\succeq0$ and $\ker Q(\bar{x})=\operatorname{span}\{\mathbf{1}_n\}$,  $Q(\bar{x})$ is positive definite on
$\mathcal{Z}$. Hence Lemma~\ref{lem:restriction_PD} implies
\( \tilde Q(\bar x)\succ0 .\)
Similarly, strict contractiveness of $F$ yields \( \tilde F \prec 0 \).  The resulting system can be interpreted as the negative-feedback
interconnection between $H(s)$ and  $\tilde Q(\bar x)$ where 
\(
H(s)=-\tilde C(sI-\tilde A)^{-1}\tilde B\tilde F
      =-g(s)\tilde F 
\) as illustrated in Fig.~\ref{fig:closed_loop_linearized}.
\begin{figure}[H]
    \centering
    \includegraphics[width=0.35\linewidth]{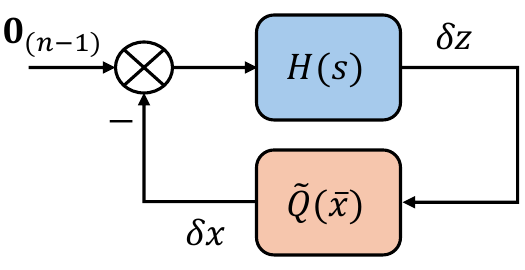}
    \caption{Block diagram representation of the linearized system.}
    \label{fig:closed_loop_linearized}
\end{figure}

Since $F$ is symmetric, it follows that
\[
H(j\omega)+H(j\omega)^*
=
-2\,\mathrm{Re}\big(g(j\omega)\big)\tilde F .
\]

\paragraph{Case 1: $g(s)$ passive.}
If $g(s)$ is passive then
$\mathrm{Re}(g(j\omega))\ge0$.
Because $-\tilde F\succ0$, it follows that
$H(j\omega)+H(j\omega)^*\succeq0$,
so $H(s)$ is passive.
By the positive real lemma {\cite[Lemma~6.2]{khalil2002nonlinear}}  there exists $P\succ0$ satisfying
\[
P\tilde A+\tilde A^\top P\preceq0,
\qquad
-P\tilde B\tilde F=\tilde C^\top .
\]

Consider $V(\delta\xi)=\tfrac12\delta\xi^\top P\delta\xi$.
Along trajectories,
\begin{align*}
\dot V
&=
\tfrac12\delta\xi^\top(P\tilde A+\tilde A^\top P)\delta\xi
-\delta\xi^\top\tilde C^\top\delta x 
\le
-\delta z^\top\tilde Q(\bar x)\delta z .
\end{align*}

Since $\tilde Q(\bar x)\succ0$, $\dot V\le0$ with equality only when
$\tilde C\delta\xi=0$. Hence the set where $\dot V=0$ is
$\ker(\tilde C)$. Because $(A,C)$ is observable, the lifted pair
$(\tilde A,\tilde C)$ is also observable, implying that the largest
invariant subset of $\ker(\tilde C)$ is $\{0\}$. By LaSalle's
invariance principle, all trajectories converge to the origin, so every linearized LTV system is asymptotically stable.
By Theorem~\ref{thm:incremental_stability}, the nonlinear closed-loop
system is globally incrementally asymptotically stable. Under Nash
stationarity and strict contractiveness, the unique Nash equilibrium is therefore globally asymptotically stable.

\paragraph{Case 2: $g(s)$ strictly passive.}
If $g(s)$ is strictly passive then
$\mathrm{Re}(g(j\omega))\ge\alpha>0$,
implying
\[
\lambda_{\min}\!\big(H(j\omega)+H(j\omega)^*\big)
\ge
2\alpha\lambda_{\min}(-\tilde F)>0 .
\]
Thus $H(s)$ is strictly passive.
By the KYP lemma there exists $P\succ0$ such that
\[
P\tilde A+\tilde A^\top P
=
-L^\top L-\varepsilon P,
\qquad
-P\tilde B\tilde F=\tilde C^\top .
\]

With $V(\delta\xi)=\tfrac12\delta\xi^\top P\delta\xi$,
\begin{align*}
\dot V
&=
-\tfrac12\|L\delta\xi\|^2
-\tfrac12\varepsilon\,\delta\xi^\top P\delta\xi
-\delta z^\top\tilde Q(\bar x)\delta z 
\le
-\tfrac{\varepsilon}{2}V .
\end{align*}

Hence all linearizations are exponentially stable.
Applying Theorem~\ref{thm:incremental_stability} yields global
exponential incremental stability of the nonlinear system.
By Nash stationarity and strict contractiveness of the game,
the unique Nash equilibrium is globally exponentially stable.
\end{proof}
\begin{remark}
The asymptotic convergence result also holds when the game is contractive with a unique interior Nash equilibrium and $g(s)$ is strictly passive. 
\end{remark}
\begin{example}[Congestion game \cite{park2021kl}]
\label{ex:congestion_game}
Consider the congestion network shown in Fig.~\ref{fig:congestion_network}
with three routes
$O\!\to\!A\!\to\!D$, $O\!\to\!B\!\to\!D$, and $O\!\to\!A\!\to\!B\!\to\!D$.
Let $x=(x_1,x_2,x_3)\in\Delta_3$ denote the population state, where
$x_i$ is the fraction of agents selecting route $i$.
Each edge has latency equal to its total flow, except edge $A\!\to\!D$,
whose latency is twice the flow. The resulting route costs are
\[
c(x)=
\begin{bmatrix}
3x_1+x_3\\
2x_2+x_3\\
x_1+x_2+3x_3
\end{bmatrix},
\quad
p=-c(x)=\underbrace{-\begin{bmatrix}
3&0&1\\
0&2&1\\
1&1&3
\end{bmatrix}}_{F}x .
\]
Since $F$ is symmetric and negative definite on $\mathcal{Z}$, the game is strictly contractive and admits a unique Nash equilibrium
\[
x^*=\left(\tfrac{4}{11},\,\tfrac{6}{11},\,\tfrac{1}{11}\right).
\]
\end{example}
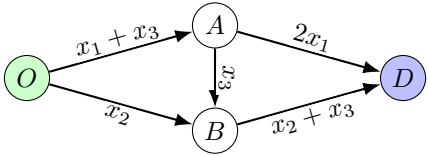
\begin{figure}[H]
  \centering
  \begin{tikzpicture}[
      >=Latex,
      node/.style={circle,draw,minimum size=6mm,inner sep=0pt},
      origin/.style={node,fill=green!20},
      dest/.style={node,fill=blue!25},
      cost/.style={pos=0.5,sloped,allow upside down,fill=white,inner sep=1pt},
      scale=1, every node/.style={transform shape}
    ]
    \node[origin] (O) at (0,0) {$O$};
    \node[node]   (A) at (2.5,0.7) {$A$};
    \node[node]   (B) at (2.5,-0.7) {$B$};
    \node[dest]   (D) at (5,0) {$D$};

    \draw[->,thick] (O) -- node[cost,above] {$x_1+x_3$} (A);
    \draw[->,thick] (O) -- node[cost,below] {$x_2$} (B);
    \draw[->,thick] (A) -- node[cost,above] {$2x_1$} (D);
    \draw[->,thick] (B) -- node[cost,below] {$x_2+x_3$} (D);
    \draw[->,thick] (A) -- node[cost,above] {$x_3$} (B);
  \end{tikzpicture}
  \caption{Congestion network}
  \label{fig:congestion_network}
\end{figure}

\begin{figure}[H]
\centering
\begin{subfigure}{0.48\linewidth}
\centering
\includegraphics[width=\linewidth]{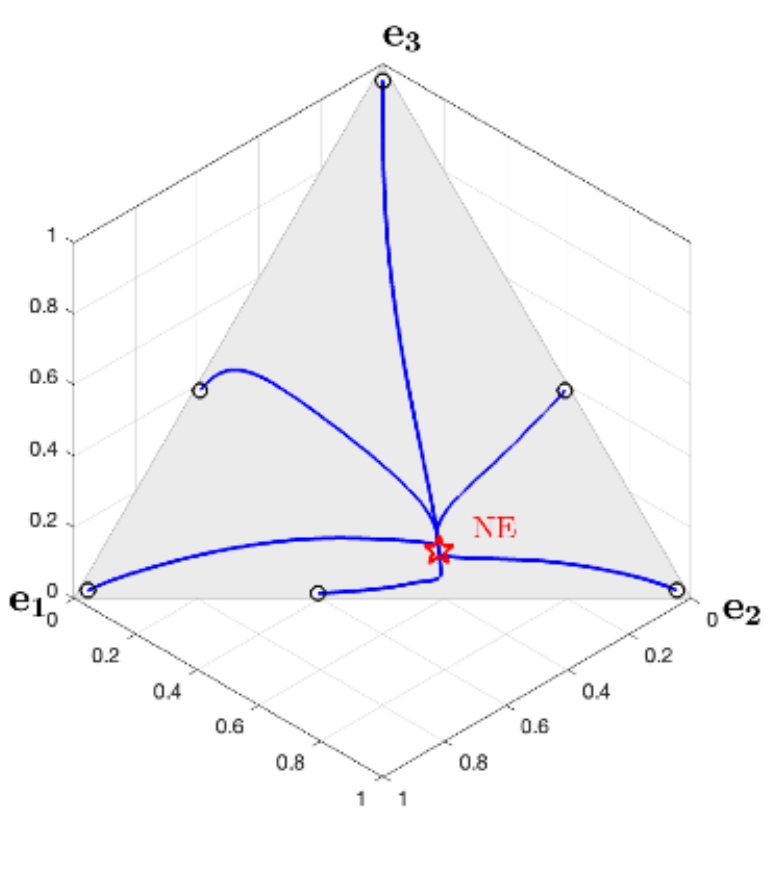}
\caption{\(G(s)=\tfrac{2s^2+3.5s+2}{s^3+3s^2+2s}I_n\).}
\end{subfigure}
\hfill
\begin{subfigure}{0.48\linewidth}
\centering
\includegraphics[width=\linewidth]{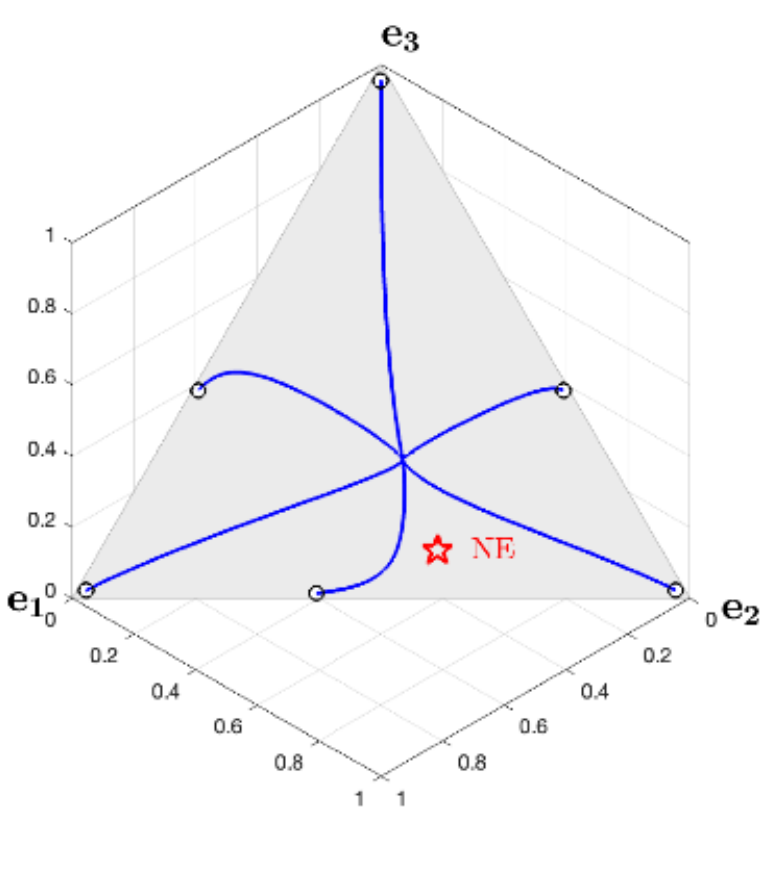}
\caption{ \(G(s)=\tfrac{1}{s+1}I_n\).}
\end{subfigure}

\caption{Convergence of learning dynamics obtained by cascading \(G(s)\) with $\sigma(\cdot)$ in the congestion game of Example~\ref{ex:congestion_game}.}
\label{fig:global_examples}
\end{figure}

Fig.~\ref{fig:global_examples}(a) illustrates convergence of the learning dynamics in the congestion game of Example~\ref{ex:congestion_game}, obtained by cascading \(G(s)=\frac{2s^2+3.5s+2}{s^3+3s^2+2s}I_n\) with $\sigma(\cdot)$. Since $G(s)$ is passive and the dynamics satisfy Nash stationarity on $\mathrm{Int}(\Delta_n)$ by Lemma~\ref{lemma:Nash_stationarity_higher_order}, Theorem~\ref{thrm:general_LD_convergence} implies that the closed-loop dynamics are incrementally asymptotically stable and the unique Nash equilibrium is globally asymptotically stable.

Fig.~\ref{fig:global_examples}(b) shows the exponential replicator dynamics (Ex-RD), corresponding to the cascade interconnection of the strictly passive transfer function \(1/(s+1)I_n\) with $\sigma(\cdot)$. In this case, Theorem~\ref{thrm:general_LD_convergence} implies that the closed-loop dynamics are incrementally exponentially stable. However, since Ex-RD does not satisfy Nash stationarity, the dynamics converge exponentially to another equilibrium \(x^\dagger=\sigma(z^*)\), where \(z^*=F\sigma(z^*)\).

\section{Conclusion}

This paper studied the convergence properties of payoff-based higher-order replicator dynamics in contractive games from a control-theoretic perspective. We first showed that payoff-based higher-order replicator dynamics, characterized by the cascade interconnection between $\left((1/s)+h(s)\right)I_n$ and the softmax mapping, converge locally to the Nash equilibrium when $h(s)$ is strictly passive. We then established global convergence results for a broader class of learning dynamics of the form $G(s)I_n$ interconnected with the softmax mapping in symmetric strictly contractive matrix games. In particular, passivity of $G(s)$ yields asymptotic convergence, while strict passivity yields exponential convergence. Future work will focus on extending the global convergence analysis to contractive games beyond the symmetric matrix case and on studying robustness properties of higher-order replicator dynamics.

\section{References}

\bibliographystyle{ieeetr}
\bibliography{refs.bib}

@inproceedings{abdelraouf2025passivity,
  title={Passivity, no-regret, and convergent learning in contractive games},
  author={Abdelraouf, Hassan and Piliouras, Georgios and Shamma, Jeff S},
  booktitle={2025 IEEE 64th Conference on Decision and Control (CDC)},
  pages={4948--4954},
  year={2025},
  organization={IEEE}
}

@article{mertikopoulos2016learning,
  title={Learning in games via reinforcement and regularization},
  author={Mertikopoulos, Panayotis and Sandholm, William H},
  journal={Mathematics of Operations Research},
  volume={41},
  number={4},
  pages={1297--1324},
  year={2016},
  publisher={INFORMS}
}

@article{gao2023second,
  title={Second-order mirror descent: Convergence in games beyond averaging and discounting},
  author={Gao, Bolin and Pavel, Lacra},
  journal={IEEE Transactions on Automatic Control},
  volume={},
  number={},
  pages={},
  year={2023},
  publisher={IEEE}
}

@article{gao2020passivity,
  title={On passivity, reinforcement learning, and higher order learning in multiagent finite games},
  author={Gao, Bolin and Pavel, Lacra},
  journal={IEEE Transactions on Automatic Control},
  volume={66},
  number={1},
  pages={121--136},
  year={2020},
  publisher={IEEE}
}

@inproceedings{arslan2006anticipatory,
  title={Anticipatory learning in general evolutionary games},
  author={Arslan, Gurdal and Shamma, Jeff S},
  booktitle={Proceedings of the 45th IEEE Conference on Decision and Control},
  pages={6289--6294},
  year={2006},
  organization={IEEE}
}

@article{laraki2013higher,
  title={Higher Order Game Dynamics},
  author={Laraki, Rida and Mertikopoulos, Panayotis},
  journal={Journal of Economic Theory},
  volume={148},
  number={6},
  pages={2666--2695},
  year={2013},
  publisher={Elsevier}
}

@article{fox2013population,
  title={Population Games, Stable Games, and Passivity},
  author={Fox, Michael J and Shamma, Jeff S},
  journal={Games},
  volume={4},
  number={4},
  pages={561--583},
  year={2013},
  publisher={MDPI}
}

@article{sandholm2009pairwise,
  title={Pairwise Comparison Dynamics and Evolutionary Foundations for {Nash} Equilibrium},
  author={Sandholm, William H},
  journal={Games},
  volume={1},
  number={1},
  pages={3--17},
  year={2009},
  publisher={MDPI}
}

@article{hofbauer2009stable,
  title={Stable games and their dynamics},
  author={Hofbauer, Josef and Sandholm, William H},
  journal={Journal of Economic Theory},
  volume={144},
  number={4},
  pages={1665--1693},
  year={2009},
  publisher={Elsevier}
}

@article{gao2017properties,
  title={On the properties of the softmax function with application in game theory and reinforcement learning},
  author={Gao, Bolin and Pavel, Lacra},
  journal={arXiv preprint arXiv:1704.00805},
  year={2017}
}

@book{sandholm2010population,
  title={Population Games and Evolutionary Dynamics},
  author={Sandholm, William H},
  year={2010},
  publisher={MIT press}
}

@incollection{sandholm2015population,
  title={Population games and deterministic evolutionary dynamics},
  author={Sandholm, William H},
  booktitle={Handbook of Game Theory with Economic Applications},
  volume={4},
  pages={703--778},
  year={2015},
  publisher={Elsevier}
}

@book{khalil2002nonlinear,
  title={Nonlinear Systems},
  author={Khalil, Hassan K.},
  edition={3},
  year={2002},
  publisher={Prentice Hall},
  address={Upper Saddle River, NJ}
}

@article{forni2013differential,
  title={A differential {Lyapunov} framework for contraction analysis},
  author={Forni, Fulvio and Sepulchre, Rodolphe},
  journal={IEEE Transactions on Automatic Control},
  volume={59},
  number={3},
  pages={614--628},
  year={2013},
  publisher={IEEE}
}

@article{lohmiller1998contraction,
  title={On contraction analysis for non-linear systems},
  author={Lohmiller, Winfried and Slotine, Jean-Jacques E},
  journal={Automatica},
  volume={34},
  number={6},
  pages={683--696},
  year={1998},
  publisher={Elsevier}
}

@article{taylor1978evolutionary,
  title={Evolutionary stable strategies and game dynamics},
  author={Taylor, Peter D and Jonker, Leo B},
  journal={Mathematical Biosciences},
  volume={40},
  number={1-2},
  pages={145--156},
  year={1978},
  publisher={Elsevier}
}

@inproceedings{park2021kl,
  title={K{L} divergence regularized learning model for multi-agent decision making},
  author={Park, Shinkyu and Leonard, Naomi Ehrich},
  booktitle={2021 American Control Conference (ACC)},
  pages={4509--4514},
  year={2021},
  organization={IEEE}
}

@article{martins2025counterclockwise,
  title={Counterclockwise dissipativity, potential games and evolutionary {N}ash equilibrium learning},
  author={Martins, Nuno C and Cert{\'o}rio, Jair and Hankins, Matthew S},
  journal={IEEE Transactions on Automatic Control},
  year={2025},
  publisher={IEEE}
}

\end{document}